\theoremstyle{definition}
\newtheorem{lemma}{Lemma}
\theoremstyle{remark}
\newcommand*{\mybox}[1]{%
  \framebox{\raisebox{0cm}[0.5\baselineskip][0.05\baselineskip]{%
    \hbox to 0.1cm {\hss#1\hss}}}\hspace{0.05cm}}
\begin{document}
\title{Two Standard Decks of Playing Cards are Sufficient for a ZKP for Sudoku\thanks{This paper is an extended version of \cite{sudoku2}, which appeared at COCOON 2021.}}
\author[1]{Suthee Ruangwises\thanks{\texttt{ruangwises@gmail.com}}}
\affil[1]{Department of Mathematical and Computing Science, Tokyo Institute of Technology, Tokyo, Japan}
\date{}
\maketitle

\begin{abstract}
Sudoku is a famous logic puzzle where the player has to fill a number between 1 and 9 into each empty cell of a $9 \times 9$ grid such that every number appears exactly once in each row, each column, and each $3 \times 3$ block. In 2020, Sasaki et al. developed a physical card-based protocol of zero-knowledge proof (ZKP) for Sudoku, which enables a prover to convince a verifier that he/she knows a solution of the puzzle without revealing it. Their protocol uses 90 cards, but requires nine identical copies of some cards, which cannot be found in a standard deck of playing cards (consisting of 52 different cards and two jokers). Hence, nine identical standard decks are required to perform that protocol, making the protocol not very practical. In this paper, we propose a new ZKP protocol for Sudoku that can be performed using only two standard decks of playing cards, regardless of whether the two decks are identical or different. In general, we also develop the first ZKP protocol for a generalized $n \times n$ Sudoku that can be performed using a deck of all different cards.

\textbf{Keywords:} zero-knowledge proof, card-based cryptography, Sudoku, puzzle
\end{abstract}

\section{Introduction}
\textit{Sudoku} is one of the world's most popular logic puzzles. A standard Sudoku puzzle consists of a $9 \times 9$ grid divided into nine blocks of size $3 \times 3$, with some of the cells already filled with numbers between 1 and 9. The objective of Sudoku is to fill a number into each empty cell such that every number from 1 to 9 appears exactly once in each row, each column, and each $3 \times 3$ block \cite{nikoli} (see Fig. \ref{fig1}). In a generalized version of Sudoku, the grid has size $n \times n$ and is divided into $n$ blocks of size $\sqrt{n} \times \sqrt{n}$, where $n$ is a perfect square. A generalized Sudoku is proven to be NP-complete \cite{np}.

\begin{figure}
\centering
\begin{tikzpicture}
\draw[step=0.5cm, color={rgb:black,1;white,1}] (0,0) grid (4.5,4.5);

\draw[line width=0.5mm] (0,0) -- (0,4.5);
\draw[line width=0.5mm] (1.5,0) -- (1.5,4.5);
\draw[line width=0.5mm] (3,0) -- (3,4.5);
\draw[line width=0.5mm] (4.5,0) -- (4.5,4.5);
\draw[line width=0.5mm] (0,0) -- (4.5,0);
\draw[line width=0.5mm] (0,1.5) -- (4.5,1.5);
\draw[line width=0.5mm] (0,3) -- (4.5,3);
\draw[line width=0.5mm] (0,4.5) -- (4.5,4.5);

\node at (1.75,0.25) {3};
\node at (2.25,0.25) {7};
\node at (2.75,0.25) {4};
\node at (4.25,0.25) {1};
\node at (0.75,0.75) {5};
\node at (2.75,0.75) {2};
\node at (3.25,0.75) {3};
\node at (3.75,0.75) {7};
\node at (1.25,1.25) {6};
\node at (2.75,1.25) {5};
\node at (3.25,1.25) {4};
\node at (0.25,2.25) {3};
\node at (0.75,2.25) {6};
\node at (2.25,2.25) {1};
\node at (0.25,2.75) {1};
\node at (1.25,2.75) {5};
\node at (1.75,2.75) {4};
\node at (2.75,2.75) {6};
\node at (3.75,2.75) {3};
\node at (0.25,3.25) {5};
\node at (0.75,3.25) {9};
\node at (2.25,3.25) {2};
\node at (3.25,3.25) {6};
\node at (3.75,3.25) {1};
\node at (4.25,3.25) {3};
\node at (1.25,4.25) {7};
\node at (2.75,4.25) {1};
\node at (3.75,4.25) {8};
\end{tikzpicture}
\hspace{1.2cm}
\begin{tikzpicture}
\draw[step=0.5cm, color={rgb:black,1;white,1}] (0,0) grid (4.5,4.5);

\draw[line width=0.5mm] (0,0) -- (0,4.5);
\draw[line width=0.5mm] (1.5,0) -- (1.5,4.5);
\draw[line width=0.5mm] (3,0) -- (3,4.5);
\draw[line width=0.5mm] (4.5,0) -- (4.5,4.5);
\draw[line width=0.5mm] (0,0) -- (4.5,0);
\draw[line width=0.5mm] (0,1.5) -- (4.5,1.5);
\draw[line width=0.5mm] (0,3) -- (4.5,3);
\draw[line width=0.5mm] (0,4.5) -- (4.5,4.5);

\node at (0.25,0.25) {9};
\node at (0.75,0.25) {8};
\node at (1.25,0.25) {2};
\node at (1.75,0.25) {3};
\node at (2.25,0.25) {7};
\node at (2.75,0.25) {4};
\node at (3.25,0.25) {5};
\node at (3.75,0.25) {6};
\node at (4.25,0.25) {1};
\node at (0.25,0.75) {4};
\node at (0.75,0.75) {5};
\node at (1.25,0.75) {1};
\node at (1.75,0.75) {9};
\node at (2.25,0.75) {6};
\node at (2.75,0.75) {2};
\node at (3.25,0.75) {3};
\node at (3.75,0.75) {7};
\node at (4.25,0.75) {8};
\node at (0.25,1.25) {7};
\node at (0.75,1.25) {3};
\node at (1.25,1.25) {6};
\node at (1.75,1.25) {1};
\node at (2.25,1.25) {8};
\node at (2.75,1.25) {5};
\node at (3.25,1.25) {4};
\node at (3.75,1.25) {2};
\node at (4.25,1.25) {9};
\node at (0.25,1.75) {2};
\node at (0.75,1.75) {4};
\node at (1.25,1.75) {8};
\node at (1.75,1.75) {7};
\node at (2.25,1.75) {5};
\node at (2.75,1.75) {3};
\node at (3.25,1.75) {1};
\node at (3.75,1.75) {9};
\node at (4.25,1.75) {6};
\node at (0.25,2.25) {3};
\node at (0.75,2.25) {6};
\node at (1.25,2.25) {9};
\node at (1.75,2.25) {2};
\node at (2.25,2.25) {1};
\node at (2.75,2.25) {8};
\node at (3.25,2.25) {7};
\node at (3.75,2.25) {4};
\node at (4.25,2.25) {5};
\node at (0.25,2.75) {1};
\node at (0.75,2.75) {7};
\node at (1.25,2.75) {5};
\node at (1.75,2.75) {4};
\node at (2.25,2.75) {9};
\node at (2.75,2.75) {6};
\node at (3.25,2.75) {8};
\node at (3.75,2.75) {3};
\node at (4.25,2.75) {2};
\node at (0.25,3.25) {5};
\node at (0.75,3.25) {9};
\node at (1.25,3.25) {4};
\node at (1.75,3.25) {8};
\node at (2.25,3.25) {2};
\node at (2.75,3.25) {7};
\node at (3.25,3.25) {6};
\node at (3.75,3.25) {1};
\node at (4.25,3.25) {3};
\node at (0.25,3.75) {8};
\node at (0.75,3.75) {1};
\node at (1.25,3.75) {3};
\node at (1.75,3.75) {6};
\node at (2.25,3.75) {4};
\node at (2.75,3.75) {9};
\node at (3.25,3.75) {2};
\node at (3.75,3.75) {5};
\node at (4.25,3.75) {7};
\node at (0.25,4.25) {6};
\node at (0.75,4.25) {2};
\node at (1.25,4.25) {7};
\node at (1.75,4.25) {5};
\node at (2.25,4.25) {3};
\node at (2.75,4.25) {1};
\node at (3.25,4.25) {9};
\node at (3.75,4.25) {8};
\node at (4.25,4.25) {4};
\end{tikzpicture}
\caption{An example of a $9 \times 9$ Sudoku puzzle (left) and its solution (right)}
\label{fig1}
\end{figure}

\subsection{Zero-Knowledge Proof}
We aim to construct a \textit{zero-knowledge proof (ZKP)} for Sudoku, which enables a prover $P$ to convince a verifier $V$ that he/she knows a solution of the puzzle without revealing any information about it. Formally, a ZKP is an interactive proof between $P$ and $V$ where both of them are given a computational problem $x$, but only $P$ knows a solution $w$ of $x$. A ZKP with perfect completeness and perfect soundness must satisfy the following three properties.

\begin{enumerate}
	\item \textbf{Perfect Completeness:} If $P$ knows $w$, then $V$ always accepts.
	\item \textbf{Perfect Soundness:} If $P$ does not know $w$, then $V$ always rejects.
	\item \textbf{Zero-knowledge:} $V$ learns nothing about $w$. Formally, there exists a probabilistic polynomial time algorithm $S$ (called a \textit{simulator}), not knowing $w$ but having a black-box access to $V$, such that the outputs of $S$ follow the same probability distribution as the outputs of the actual protocol.
\end{enumerate}

The concept of a ZKP was first introduced by Goldwasser et al. \cite{zkp0}. Instead of computational ZKPs, recently many results have been focusing on constructing physical ZKPs using portable objects found in everyday life such as a deck of cards. These physical protocols have benefits that they do not require electronic devices and also allow external observers to check that the prover truthfully executes the protocol (which is often a challenging task for digital protocols). They also have great didactic values and can be used to teach the concept of a ZKP to non-experts.

\section{Previous Protocols}
In 2009, Gradwohl et al. \cite{sudoku0} developed the first card-based ZKP protocols for Sudoku, and also the first for any kind of logic puzzle. Each of the six developed protocols, however, either has a nonzero soundness error or requires special tools such as scratch-off cards. Later in 2020, Sasaki et al. \cite{sudoku} developed an improved ZKP protocol for Sudoku that achieves perfect soundness without using special tools.

\subsection{Uniqueness Verification Protocol} \label{unique}
Before showing the protocol of Sasaki et al., we first explain the following \textit{uniqueness verification protocol}, which was also developed by the same authors \cite{sudoku}. This protocol allows the prover $P$ to convince the verifier $V$ that a sequence $\sigma$ of $n$ face-down cards is a permutation of different cards $a_1,a_2,...,a_n$ in some order, without revealing their orders. It also preserves the orders of the cards in $\sigma$ (so that the sequence can be later used in other protocols).

Let $x_1,x_2,...,x_n$ be another set of $n$ different cards. $P$ performs the following steps.

\begin{figure}[H]
\centering
\begin{tikzpicture}
\node at (-0.5,1.2) {$\sigma$:};
\node at (0.0,1.2) {\mybox{?}};
\node at (0.6,1.2) {\mybox{?}};
\node at (1.2,1.2) {...};
\node at (1.8,1.2) {\mybox{?}};

\node at (0.0,0.4) {\mybox{?}};
\node at (0.6,0.4) {\mybox{?}};
\node at (1.2,0.4) {...};
\node at (1.8,0.4) {\mybox{?}};

\node at (0.05,0) {$x_1$};
\node at (0.65,0) {$x_2$};
\node at (1.85,0) {$x_n$};
\end{tikzpicture}
\caption{A $2 \times n$ matrix constructed in Step 1}
\label{fig2}
\end{figure}

\begin{enumerate}
	\item Publicly place face-down cards $x_1,x_2,...,x_n$ below the face-down sequence $\sigma$ in this order from left to right to form a $2 \times n$ matrix of cards (see Fig \ref{fig2}).
	\item Rearrange all columns of the matrix by a uniformly random permutation. (This step can be performed in real world by putting both cards in each column into an envelope and scrambling all envelopes together.)
	\item Turn over all cards in the top row. $V$ verifies that the sequence is a permutation of $a_1,a_2,...,a_n$. Otherwise, $V$ rejects.
	\item Turn over all face-up cards. Rearrange all columns of the matrix by a uniformly random permutation.
	\item Turn over all cards in the bottom row. Rearrange the columns such that the cards in the bottom rows are $x_1,x_2,...,x_n$ in this order from left to right. The sequence in the top row now returns to its original state.
\end{enumerate}

\subsection{Protocol of Sasaki et al.}
Sasaki et al. \cite{sudoku} developed a protocol to verify a solution of an $n \times n$ Sudoku puzzle. This protocol has three slightly different variants. Here we will show only the first variant, which is the one using the least number of cards.

Each card used in this protocol has a positive number on the front side (denoted by \hbox{\mybox{1}}, \hbox{\mybox{2},} ...). All cards have identical back sides (denoted by \mybox{?}). First, on each cell already having a number $j$, $P$ publicly places a face-down \mybox{$j$}; on each empty cell that has a number $j$ in $P$'s solution, $P$ secretly places a face-down \mybox{$j$}.

$P$ then applies the uniqueness verification protocol to verify that every row, column, and block contains a permutation of \mybox{1}, \mybox{2}, ..., \mybox{$n$}.

In total, this protocol uses $n^2+n$ cards: $n$ identical copies of \mybox{1}, \mybox{2}, ..., \mybox{$n$} (to encode the numbers in the grid), and another set of $n$ different cards (to use in the uniqueness verification protocol). For a standard $9 \times 9$ puzzle, the protocol uses 90 cards, which is less than the number of cards in two standard decks (108). However, the protocol requires nine identical copies of \mybox{1}, \mybox{2}, ..., \mybox{9}. As a standard deck consists of 54 different cards (including two different jokers), nine identical decks are actually required in order to perform this protocol, making the protocol not very practical. Another choice is to use a different kind of deck (e.g. cards from board games) that includes several identical copies of some cards, but these decks are more difficult to find in everyday life.

Considering the drawback of this protocol, we aim to develop a more practical ZKP protocol for a $9 \times 9$ Sudoku that can be performed using only two standard decks of playing cards.

\subsection{Related Work}
After the development of card-based ZKP protocols for Sudoku, card-based ZKP protocols for other popular logic puzzles have also been proposed, including Nonogram \cite{nonogram,nonogram2}, Akari \cite{akari}, Takuzu \cite{akari,takuzu}, Kakuro \cite{akari,kakuro}, KenKen \cite{akari}, Makaro \cite{makaro}, Norinori \cite{norinori}, Slitherlink \cite{slitherlink}, Juosan \cite{takuzu}, Numberlink \cite{numberlink}, Suguru \cite{suguru}, Ripple Effect \cite{ripple}, Nurikabe \cite{nurikabe}, Hitori \cite{nurikabe}, Cryptarithmetic \cite{crypta}, and Bridges \cite{bridges}.	

Apart from verifying solutions of logic puzzles, card-based protocols have also been extensively studied in secure multi-party computation, a setting where multiple parties want to jointly compute a function of their secret inputs without revealing them. The vast majority of the developed protocols, however, also uses multiple identical copies of two different cards (usually denoted by \mybox{$\clubsuit$} and \mybox{$\heartsuit$}), making them not implementable by a single standard deck of playing cards. The only exceptions are \cite{standard3,standard4,standard2,standard1} which introduced AND, XOR, and copy protocols using a standard deck, and \cite{standardyao} which introduced a Yao's millionaire protocol using a standard deck. In \cite{standard4}, the authors also posed an open problem to develop ZKP protocols for logic puzzles using a standard deck.

Pratically, a standard deck of playing cards consists of 54 different cards (including two different jokers). Theoretically, it is also a challenging problem to develop a protocol that can be implemented using a deck of all different cards, so we also study the setting where the deck consists of \mybox{1}, \mybox{2}, ... where each card can have an arbitrarily large number on it.

\section{Our Contribution}
In this paper, we propose a new ZKP protocol for a generalized $n \times n$ Sudoku puzzle with perfect completeness and soundness using a deck of all different cards.

There are two slightly different methods to implement our protocol. The first one uses $n^2+n\sqrt{n}+n+\sqrt{n}$ cards and $4n\sqrt{n}$ shuffles. The second one uses $n^2+2n+3\sqrt{n}$ cards and at most $2n^2(\sqrt{n}-1)+2$ shuffles (see Table \ref{table1}).

In particular, for a standard $9 \times 9$ Sudoku puzzle, our protocol (with the second method of implementation) uses 108 cards and can be performed using two standard decks of playing cards, regardless of whether the two decks are identical or different (see Table \ref{table2}).

Theoretically, this work is an important step in card-based cryptography as it is the first ZKP protocol for any kind of logic puzzle that can be performed using a deck of all different cards, answering the open problem posed in \cite{standard4}.

The main difference from the conference version of this paper \cite{sudoku2} is the inclusion of an optimization of the number of shuffles in Section \ref{opt}, which was omitted in the conference version.

\begin{table}[H]
	\centering
	\begin{tabular}{|c|c|c|c|}
		\hline
		\textbf{Protocol} & \textbf{\thead{Standard\\ Deck?}} & \textbf{\#Cards} & \textbf{\#Shuffles} \\ \hline
		\textbf{Sasaki et al. \cite{sudoku}} & no & $n^2+n$ & $5n$ \\ \hline
		\textbf{Ours (\S \ref{m1})} & yes & $n^2+n\sqrt{n}+n+\sqrt{n}$ & $4n\sqrt{n}$ \\ \hline
		\textbf{Ours (\S \ref{m2})} & yes & $n^2+2n+3\sqrt{n}$ & \thead{$2n^2(\sqrt{n}-1)$ for even $n$\\ $2n^2(\sqrt{n}-1)+2$ for odd $n>9$} \\ \hline
	\end{tabular}
	\medskip
	\caption{The number of required cards and shuffles for each protocol for an $n \times n$ Sudoku} \label{table1}
\end{table}

\begin{table}[H]
	\centering
	\begin{tabular}{|c|c|c|c|}
		\hline
		\textbf{Protocol} & \textbf{\thead{Standard\\ Deck?}} & \textbf{\#Cards} & \textbf{\#Shuffles} \\ \hline
		\textbf{Sasaki et al. \cite{sudoku}} & no & 90 & 45 \\ \hline
		\textbf{Ours (\S \ref{m1})} & yes & 120 & 108 \\ \hline
		\textbf{Ours (\S \ref{m2})} & yes & 108 & 322 \\ \hline
	\end{tabular}
	\medskip
	\caption{The number of required cards and shuffles for each protocol for a $9 \times 9$ Sudoku} \label{table2}
\end{table}

\section{Preliminaries}
At first, we assume that all cards used in our protocols have different front sides and identical back sides (although we will later show that some pairs of cards can have identical front sides or different back sides, and our protocols still work correctly).

\subsection{Marked Matrix}
Suppose we have a $k \times \ell$ matrix of face-down cards (we call these cards \textit{encoding cards}). Let Row $i$ denote an $i$-th topmost row and let Column $j$ denote a $j$-th leftmost column. To the left of Column 1, publicly place face-down cards $p_1,p_2,...,p_k$ in this order from top to bottom; this new column is called Column 0. Analogously, above Row 1, publicly place face-down cards $q_1,q_2,...,q_\ell$ in this order from left to right; this new row is called Row 0.

We call this new structure a $k \times \ell$ \textit{marked matrix} (see Fig \ref{fig3}), and we call the cards in Row 0 and Column 0 \textit{marking cards}.

\begin{figure}[H]
\centering
\begin{tikzpicture}
\node at (0.0,0.6) {\mybox{?}};
\node at (0.5,0.6) {\mybox{?}};
\node at (1.0,0.6) {\mybox{?}};
\node at (1.5,0.6) {\mybox{?}};
\node at (2.0,0.6) {\mybox{?}};

\node at (0.0,1.2) {\mybox{?}};
\node at (0.5,1.2) {\mybox{?}};
\node at (1.0,1.2) {\mybox{?}};
\node at (1.5,1.2) {\mybox{?}};
\node at (2.0,1.2) {\mybox{?}};

\node at (0.0,1.8) {\mybox{?}};
\node at (0.5,1.8) {\mybox{?}};
\node at (1.0,1.8) {\mybox{?}};
\node at (1.5,1.8) {\mybox{?}};
\node at (2.0,1.8) {\mybox{?}};

\node at (0.0,2.4) {\mybox{?}};
\node at (0.5,2.4) {\mybox{?}};
\node at (1.0,2.4) {\mybox{?}};
\node at (1.5,2.4) {\mybox{?}};
\node at (2.0,2.4) {\mybox{?}};

\node at (0.0,3.2) {\mybox{?}};
\node at (0.5,3.2) {\mybox{?}};
\node at (1.0,3.2) {\mybox{?}};
\node at (1.5,3.2) {\mybox{?}};
\node at (2.0,3.2) {\mybox{?}};

\node at (0.0,3.6) {$q_1$};
\node at (0.5,3.6) {$q_2$};
\node at (1.0,3.6) {$q_3$};
\node at (1.5,3.6) {$q_4$};
\node at (2.0,3.6) {$q_5$};

\node at (-0.7,0.6) {\mybox{?}};
\node at (-0.7,1.2) {\mybox{?}};
\node at (-0.7,1.8) {\mybox{?}};
\node at (-0.7,2.4) {\mybox{?}};

\node at (-1.1,0.6) {$p_4$};
\node at (-1.1,1.2) {$p_3$};
\node at (-1.1,1.8) {$p_2$};
\node at (-1.1,2.4) {$p_1$};

\draw[] (-1.5,0.3) -- (-1.5,4.7);
\draw[] (-2.5,3.9) -- (2.7,3.9);

\node at (-1.8,0.6) {4};
\node at (-1.8,1.2) {3};
\node at (-1.8,1.8) {2};
\node at (-1.8,2.4) {1};
\node at (-1.8,3.2) {0};
\node at (-2.5,1.9) {Row};

\node at (-0.7,4.2) {0};
\node at (0.0,4.2) {1};
\node at (0.5,4.2) {2};
\node at (1.0,4.2) {3};
\node at (1.5,4.2) {4};
\node at (2.0,4.2) {5};
\node at (0.65,4.7) {Column};
\end{tikzpicture}
\caption{An example of a $4 \times 5$ marked matrix}
\label{fig3}
\end{figure}

\subsection{Shuffle Operations}
Given a $k \times \ell$ marked matrix and a set $S \subseteq \{1,2,...,k\}$, an operation \texttt{row\_shuffle}($S$) rearranges the rows of the matrix with indices in $S$ (including marking cards in Column 0) by a uniformly random permutation. For example, \texttt{row\_shuffle}($\{3,4,5\}$) rearranges Row 3, Row 4, and Row 5 of the matrix by a uniformly random permutation. This operation can be performed in real world by putting all cards in each row with an index in $S$ into an envelope and scrambling all envelopes together.

Analogously, for a set $S \subseteq \{1,2,...,\ell\}$, an operation \texttt{col\_shuffle}($S$) rearranges the columns of the matrix with indices in $S$ (including marking cards in Row 0) by a uniformly random permutation.

\subsection{Rearrangement Protocol} \label{revert}
After applying some shuffle operations to a marked matrix, a \textit{rearrangement protocol} reverts the matrix back to its original state. Slightly different variants of this protocol with the same idea has been used in previous work \cite{makaro,revert1,revert2,numberlink,ripple,sudoku}.

Suppose we have a $k \times \ell$ marked matrix $M$ with marking cards $p_1,p_2,...,p_k$ in Column 0 and $q_1,q_2,...,q_\ell$ in Row 0. We perform the following steps.

\begin{enumerate}
	\item Apply \texttt{row\_shuffle}($\{1,2,...,k\}$) and \texttt{col\_shuffle}($\{1,2,...,\ell\}$) to $M$.
	\item Turn over all marking cards in Column 0 and Row 0. Rearrange the rows of $M$ such that the marking cards in Column 0 are $p_1,p_2,...,p_k$ in this order from top to bottom. Rearrange the columns of $M$ such that the marking cards in Row 0 are $q_1,q_2,...,q_\ell$ in this order from left to right.
\end{enumerate}

\subsection{Standard Deck Chosen Cut Protocol} \label{chosen}
Given a $k \times \ell$ marked matrix $M$, a \textit{standard deck chosen cut protocol} allows the prover $P$ to choose a card located at Row $i$ and Column $j$ of $M$ he/she wants without revealing $i$ or $j$ to the verifier $V$. This protocol was modified from an original \textit{chosen cut protocol} of Koch and Walzer \cite{koch} (which uses identical copies of a \mybox{$\clubsuit$} and a \mybox{$\heartsuit$}) so that it can be performed using a standard deck. $P$ performs the following steps.

\begin{enumerate}
	\item On each of the $k\ell$ encoding cards in the matrix, secretly stack each of face-down cards $x_1,x_2,...,x_{k\ell}$ (called \textit{helping cards}) such that $x_1$ is located at Row $i$ and Column $j$, and $x_2,x_3,...,x_{k\ell}$ are in a uniformly random permutation (which is known to $P$ but not to $V$).
	\item Apply \texttt{row\_shuffle}($\{1,2,...,k\}$) and \texttt{col\_shuffle}($\{1,2,...,\ell\}$) to $M$.
	\item Turn over all helping cards. Locate the position of $x_1$. The encoding card from that stack is the one originally located at Row $i$ and Column $j$ as desired.
	\item Remove all helping cards. Apply the rearrangement protocol to revert $M$ to its original state.
\end{enumerate}

This protocol will be implicitly used in our main protocol, with Step 2 being replaced by equivalent operations.

\section{Main Protocol}
For simplicity, we will show a protocol for a standard $9 \times 9$ Sudoku puzzle. Our protocol can be straightforwardly generalized to an $n \times n$ puzzle.

We use the following cards in our protocol.
\begin{itemize}
	\item encoding cards $a_j,b_j,c_j,d_j,e_j,f_j,g_j,h_j,i_j$ ($j=1,2,...,9$)
	\item marking cards $p_j$ ($j=1,2,3$) and $q_j$ ($j=1,2,...,9$)
	\item helping cards $x_j,y_j,z_j$ ($j=1,2,...,9$)
\end{itemize}

Suppose the grid is divided into blocks $A, B, ..., I$ (see Fig. \ref{figM1}). We use a card $a_j$ ($j=1,2,...,9$) to encode a number $j$ in Block $A$. Analogously, we use cards $b_j,c_j,...,i_j$ ($j=1,2,...,9$) to encode numbers $j$ in blocks $B, C, ..., I$, respectively.

\begin{figure}[H]
\centering
\begin{tikzpicture}
\draw[step=0.5cm, color={rgb:black,1;white,1}] (0,0) grid (4.5,4.5);

\draw[line width=0.5mm] (0,0) -- (0,4.5);
\draw[line width=0.5mm] (1.5,0) -- (1.5,4.5);
\draw[line width=0.5mm] (3,0) -- (3,4.5);
\draw[line width=0.5mm] (4.5,0) -- (4.5,4.5);
\draw[line width=0.5mm] (0,0) -- (4.5,0);
\draw[line width=0.5mm] (0,1.5) -- (4.5,1.5);
\draw[line width=0.5mm] (0,3) -- (4.5,3);
\draw[line width=0.5mm] (0,4.5) -- (4.5,4.5);

\node at (0.75,3.75) {\resizebox{!}{8mm}{\color{gray} A}};
\node at (2.25,3.75) {\resizebox{!}{8mm}{\color{gray} B}};
\node at (3.75,3.75) {\resizebox{!}{8mm}{\color{gray} C}};
\node at (0.75,2.25) {\resizebox{!}{8mm}{\color{gray} D}};
\node at (2.25,2.25) {\resizebox{!}{8mm}{\color{gray} E}};
\node at (3.75,2.25) {\resizebox{!}{8mm}{\color{gray} F}};
\node at (0.75,0.75) {\resizebox{!}{8mm}{\color{gray} G}};
\node at (2.25,0.75) {\resizebox{!}{8mm}{\color{gray} H}};
\node at (3.75,0.75) {\resizebox{!}{8mm}{\color{gray} I}};
\end{tikzpicture}
\caption{Blocks $A, B, C, D, E, F, G, H$, and $I$ in the grid}
\label{figM1}
\end{figure}

On each cell already having a number, $P$ publicly places a face-down corresponding card (e.g. places a card $b_3$ on a cell with a number 3 in Block $B$). On each empty cell, $P$ secretly places a face-down corresponding card according to his/her solution.

\subsection{Block Verification} \label{m0}
First, $P$ performs the following steps to verify that every number from 1 to 9 appears exactly once in each block.

\begin{enumerate}
	\item Apply the uniqueness verification protocol in Section \ref{unique} to verify that Block $A$ consists of cards $a_1,a_2,...,a_9$ in some order.
	\item Analogously perform Step 1 for Blocks $B, C, ..., I$.
\end{enumerate}

Now $V$ is convinced that every number from 1 to 9 appears exactly once in each block.

\subsection{Row/Column Verification}
Next, $P$ will verify that every number from 1 to 9 appears exactly once in each row and column. There are two methods to do this.

\subsubsection{Method A} \label{m1}
$P$ performs Steps 1 to 6 as shown below to verify that a number 1 appears exactly once in each of the three topmost rows.

\begin{enumerate}
	\item Take the cards from the three topmost rows to form a $3 \times 9$ matrix and publicly place marking cards $p_1,p_2,p_3$ in Column 0 and $q_1,q_2,...,q_9$ in Row 0 to create a $3 \times 9$ marked matrix $M$.
	\item On each encoding card in Block $A$, secretly stack each of face-down cards $x_1,x_2,...,x_9$ such that $x_1$ is on $a_1$, and $x_2,x_3,...,x_9$ are in a uniformly random permutation (which is known to $P$ but not to $V$).
	\item Do the same for cards $y_1,y_2,...,y_9$ in Block $B$ (with $y_1$ on $b_1$) and $z_1,z_2,...,z_9$ in Block $C$ (with $z_1$ on $c_1$).
	\item Apply \texttt{row\_shuffle}($\{1,2,3\}$), \texttt{col\_shuffle}($\{1,2,3\}$), \texttt{col\_shuffle}($\{4,5,6\}$), and \linebreak \texttt{col\_shuffle}($\{7,8,9\}$) to $M$.
	\item Turn over all helping cards. Locate the positions of $x_1$, $y_1$, and $z_1$. Turn over the encoding cards in these three stacks to show that they are $a_1$, $b_1$, and $c_1$, respectively, and that they are all located at different rows. Otherwise, $V$ rejects.
	\item Remove all helping cards and turn all encoding cards face-down. Apply the rearrangement protocol in Section \ref{revert} to revert $M$ to its original state.
\end{enumerate}

Note that Steps 2 to 6 are equivalent to applying the standard deck chosen cut protocol in Section \ref{chosen} to Blocks $A$, $B$, and $C$, simultaneously. These steps ensure that the three 1s in Blocks $A$, $B$, and $C$ are all located at different rows. Since it has already been shown that each block contains exactly one 1, this implies there is exactly one 1 in each of the three topmost rows.

\begin{enumerate}
	\setcounter{enumi}{6}
	\item Perform Steps 1 to 6 analogously for numbers $2, 3, ..., 9$. Now $V$ is convinced that every number appears exactly once in each of the three topmost rows.
	\item Perform Steps 1 to 7 analogously for Blocks $D$, $E$, and $F$, and for Blocks $G$, $H$, and $I$ to verify the rest of the rows. The verification for columns also works similarly (take the cards from Blocks $A$, $D$, and $G$, from Blocks $B$, $E$, and $H$, and from Blocks $C$, $F$, and $I$, and just transpose the marked matrix).
\end{enumerate}

Now $V$ is convinced that every number from 1 to 9 appears exactly once in each block, each row, and each column.

This method uses 81 encoding cards, 12 marking cards, and 27 helping cards, resulting in the total of 120 cards, slightly more than the number of cards in two standard decks. It uses $18 + 6 \times 9 \times 3 \times 2 = 342$ shuffles (which can be reduced to 108 after the optimization in Section \ref{opta}). We aim to further reduce the number of required cards as a trade-off between the numbers of cards and shuffles.

\subsubsection{Method B} \label{m2}
Observe that in Steps 1 to 6 of Method A, we verify that the three 1s in Blocks $A$, $B$, and $C$ are all located at different rows by verifying these three blocks at the same time, which requires a lot of marking and helping cards. Instead, we can first verify that the two 1s in Blocks $A$ and $B$ are located at different rows, then do the same for Blocks $A$ and $C$, and for Blocks $B$ and $C$. This leads to the same conclusion that the three 1s in Blocks $A$, $B$, and $C$ are all located at different rows.

$P$ performs Steps 1 to 6 as shown below to verify that the two 1s in Blocks $A$ and $B$ are located at different rows.

\begin{enumerate}
	\item Take the cards from blocks $A$ and $B$ to form a $3 \times 6$ matrix and publicly place marking cards $p_1,p_2,p_3$ in Column 0 and $q_1,q_2,...,q_6$ in Row 0 to create a $3 \times 6$ marked matrix $M$.
	\item On each encoding card in Block $A$, secretly stack each of face-down cards $x_1,x_2,...,x_9$ such that $x_1$ is on $a_1$, and $x_2,x_3,...,x_9$ are in a uniformly random permutation (which is known to $P$ but not to $V$).
	\item Do the same for cards $y_1,y_2,...,y_9$ in Block $B$ (with $y_1$ on $b_1$).
	\item Apply \texttt{row\_shuffle}($\{1,2,3\}$), \texttt{col\_shuffle}($\{1,2,3\}$), and \texttt{col\_shuffle}($\{4,5,6\}$) to $M$.
	\item Turn over all helping cards. Locate the positions of $x_1$ and $y_1$. Turn over the encoding cards in both stacks to show that they are $a_1$ and $b_1$, respectively, and that they are located at different rows. Otherwise, $V$ rejects.
	\item Remove all helping cards and turn all encoding cards face-down. Apply the rearrangement protocol in Section \ref{revert} to revert $M$ to its original state.
\end{enumerate}

Now $V$ is convinced that the two 1s in Blocks $A$ and $B$ are located at different rows.

\begin{enumerate}
	\setcounter{enumi}{6}
	\item Perform Steps 1 to 6 analogously for numbers $2, 3, ..., 9$.
	\item Perform Steps 1 to 7 analogously for Blocks $A$ and $C$, and for Blocks $B$ and $C$. Now $V$ is convinced that every number appears exactly once in each of the three topmost rows.
	\item Perform Steps 1 to 8 analogously to verify the rest of the rows. The verification for columns also works similarly.
\end{enumerate}

Now $V$ is convinced that every number from 1 to 9 appears exactly once in each block, each row, and each column.

This method uses 81 encoding cards, nine marking cards, and 18 helping cards, resulting in the total of 108 cards, which is exactly the number of cards from two standard decks (including jokers). It uses $18 + 5 \times 9 \times 3 \times 3 \times 2 = 828$ shuffles (which can be reduced to 322 after the optimization in Section \ref{optb}).

We say that two cards are from the same set if they are denoted by the same letter with different indices (e.g. $d_2$ and $d_5$ are from the same set). Notice that in both methods, cards from different sets never get mixed together. Therefore, cards from different sets can have identical front sides or different back sides (or even different sizes) and our protocol still works correctly. The only requirement is that all cards from the same set must have different front sides and identical back sides.

Therefore, when implementing Method B using two standard decks of playing cards, we can, for example, use 54 cards from the first deck in the sets $a_j,b_j,...,f_j$ and 54 cards from the second deck in the remaining sets. The protocol always works correctly regardless of whether the two decks are identical or different, since it allows cards from different sets to have identical front sides (in case of identical decks) or different back sides or sizes (in case of different decks). Note that in some decks, the two jokers are identical; in that case, we just need to make sure that the two jokers are in different sets.

\subsection{Generalization}
This protocol can be straightforwardly generalized to an $n \times n$ Sudoku puzzle.

Method A uses $n^2$ encoding cards, $n+\sqrt{n}$ marking cards, and $n\sqrt{n}$ helping cards, resulting in the total of $n^2+n\sqrt{n}+n+\sqrt{n}$ cards. It uses $2n + (\sqrt{n}+3) \times n \times \sqrt{n} \times 2 = 2n^2+6n\sqrt{n}+2n$ shuffles (which can be reduced to $4n\sqrt{n}$ after the optimization in Section \ref{opta}).

Method B uses $n^2$ encoding cards, $3\sqrt{n}$ marking cards, and $2n$ helping cards, resulting in the total of $n^2+2n+3\sqrt{n}$ cards. It uses $2n + 5 \times n \times \binom{\sqrt{n}}{2} \times \sqrt{n} \times 2 = 5n^2(\sqrt{n}-1)+2n$ shuffles (which can be reduced to at most $2n^2(\sqrt{n}-1)+2$ after the optimization in Section \ref{optb}).

\section{Proof of Correctness and Security}
We will prove the perfect completeness, perfect soundness, and zero-knowledge properties of our protocol.

\begin{lemma}[Perfect Completeness] \label{lem1}
If $P$ knows a solution of the Sudoku puzzle, then $V$ always accepts.
\end{lemma}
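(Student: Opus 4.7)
The plan is to trace through every point in the protocol where $V$ could potentially reject, and show that each such rejection is avoided whenever $P$'s placement of encoding cards corresponds to a genuine Sudoku solution. There are three such rejection points: (i) the uniqueness verification protocol invoked during block verification, (ii) Step 5 of Method A (or equivalently Method B) during row/column verification, and (iii) the implicit correctness of the rearrangement protocol, which must be able to restore the marked matrix.

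First I would handle the block verification stage. Since $P$'s placement encodes a valid solution, each of the nine blocks contains exactly one card of each type $a_j$ (respectively $b_j$, $\ldots$, $i_j$) for $j=1,\ldots,9$. Thus in each invocation of the uniqueness verification protocol from Section \ref{unique}, the top row, after the random column shuffle in Step 2, is some permutation of $a_1,\ldots,a_9$ (or the analogous set), so $V$'s check in Step 3 passes. The subsequent steps merely re-sort by the known auxiliary cards $x_1,\ldots,x_9$ and therefore cannot cause a rejection.

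Next I would handle row/column verification. In Method A, after the random row and per-block column shuffles in Step 4, the stack containing $x_1$ still sits on the card $a_1$ (since helping cards are stacked on top of encoding cards and the shuffles move each stack as a unit). The same holds for $y_1$ on $b_1$ and $z_1$ on $c_1$. Moreover, because $P$'s solution satisfies the row constraint, the cards $a_1$, $b_1$, $c_1$ occupy three distinct rows among the three topmost rows of the grid; the row shuffle preserves this ``all-different-rows'' property. Hence $V$'s check in Step 5 succeeds. The analogous argument works verbatim for each digit $j=2,\ldots,9$, for the other row-bands of blocks, and for the transposed column verification; and in Method B the same reasoning applies pairwise to $\{A,B\}$, $\{A,C\}$, $\{B,C\}$.

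Finally I would note that the rearrangement protocol in Section \ref{revert} always restores the marked matrix to its original state (since the marking cards $p_i$, $q_j$ are all distinct and publicly known), so the protocol can proceed across iterations without ever reaching an inconsistent configuration. Putting these observations together yields that $V$ never rejects. I do not expect any genuine obstacle here: the lemma is essentially a bookkeeping verification that (a) the helping-card stacking preserves the correspondence between $x_1,y_1,z_1$ and the $1$-cards of their blocks through any row/column shuffle, and (b) a valid Sudoku solution by definition satisfies the row, column, and block constraints that the protocol checks.
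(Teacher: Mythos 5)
Your proposal is correct and follows essentially the same approach as the paper's proof: a valid solution makes each block a permutation (so uniqueness verification passes) and places equal digits from different blocks in distinct rows and columns (so Step 5 of Methods A and B passes). The paper states this in two sentences; you simply spell out the bookkeeping in more detail.
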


\begin{proof}
Suppose $P$ knows a solution and places cards on the grid accordingly. Every number from 1 to 9 will appear exactly once in each row, each column, and each block. Hence, the uniqueness verification protocol will pass for every block. Also, the same numbers from different blocks are always located at different rows and columns, so both Methods A and B will pass. Therefore, $V$ always accepts.
\end{proof}

\begin{lemma}[Perfect Soundness] \label{lem2}
If $P$ does not know a solution of the Sudoku puzzle, then $V$ always rejects.
\end{lemma}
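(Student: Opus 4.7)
The plan is to argue the contrapositive: assuming $V$ accepts every step of the protocol, I will show that the cards $P$ has placed on the grid encode a valid Sudoku solution, in particular one agreeing with the clues.

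The first layer of the argument invokes the block verification in Section~\ref{m0}. Because the uniqueness verification protocol of Section~\ref{unique} has perfect soundness, and because $P$ possesses only a single copy of each encoding card, a successful block verification for Block $A$ forces its nine cards to be exactly $\{a_1,\ldots,a_9\}$, and likewise for the other eight blocks. Hence every block contains each of the digits $1,\ldots,9$ exactly once, and in particular no ``wrong'' card has been smuggled across blocks.

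The second and main layer handles the row/column verification; I analyse Method~A in detail, since Method~B reduces to the same reasoning applied to each pair of blocks in a band (and for $\sqrt{n}$ items distributed among $\sqrt{n}$ rows, pairwise distinctness is equivalent to being a permutation). The key invariant is that helping cards move together with the encoding cards they are stacked upon. So if $P$ placed $x_1$ on an encoding card that is not $a_1$, the discrepancy is exposed when Step~5 turns over that stack's encoding card, and $V$ rejects; a symmetric statement holds for $y_1$ and $z_1$. Granting then that $x_1,y_1,z_1$ sit on the genuine $a_1,b_1,c_1$, observe that \texttt{row\_shuffle}$(\{1,2,3\})$ applies one common permutation $\pi$ of the three rows to all three blocks simultaneously, while each \texttt{col\_shuffle} permutes columns only within a single block, so the relative rows of $a_1,b_1,c_1$ are preserved. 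Thus Step~5 succeeds if and only if these three cards originally occupied three distinct rows; combined with the first layer, this forces exactly one $1$ in each of the three topmost rows. Iterating over digits and bands, and then transposing for columns, yields all row and column constraints, and the publicly placed clues are automatically respected.

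The main obstacle I anticipate is making airtight the claim that $P$'s latitude in choosing the ``random'' permutation of $x_2,\ldots,x_9$ (and similarly $y_2,\ldots,y_9$ and $z_2,\ldots,z_9$) in Steps~2 and~3 cannot be exploited to fool Step~5. This freedom only relabels which non-$1$ encoding cards sit under which helping cards; it moves no encoding card and cannot circumvent the identity check on the stack marked by $x_1$. Once this point is pinned down, the remaining work is routine case-tracking across digits, bands, and the row/column transposition, together with invoking Section~\ref{revert} to ensure the marked matrix is properly restored between iterations so the inductive argument can continue.
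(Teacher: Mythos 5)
Your proposal is correct and follows essentially the same route as the paper's proof: the paper argues directly that a repeated number in a block is caught by the uniqueness verification protocol and a repeated number in a row or column (necessarily spanning two blocks, given the block check) is caught by Method A or B, while you present the contrapositive of exactly this two-layer case analysis, with additional (accurate) detail on why the shuffle mechanics preserve relative row positions and why $P$'s freedom in placing $x_2,\ldots,x_9$ cannot be exploited. The extra detail strengthens rather than changes the argument.
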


\begin{proof}
Suppose $P$ does not know a solution. There will be a number that appears at least twice in the same row, column, or block. If it appears twice in a block, the uniqueness verification protocol for that block will fail. If it appears twice in different blocks in the same row (resp. column), Method A will fail when verifying the three blocks containing that row (resp. column); also, method B will fail when verifying the two blocks where these two numbers appear. Therefore, $V$ always rejects.
\end{proof}

\begin{lemma}[Zero-Knowledge] \label{lem3}
During the verification, $V$ learns nothing about $P$'s solution.
\end{lemma}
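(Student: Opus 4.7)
The plan is to exhibit a probabilistic polynomial-time simulator $S$ that, without access to $P$'s solution, produces transcripts indistinguishable from those of the real protocol. The standard paradigm for card-based ZKPs applies here: I enumerate every moment at which cards are turned face up, show that the distribution of the revealed cards at that moment is a function only of public data and fresh uniform randomness, and then have $S$ sample that distribution directly. The face-up moments are (i) the two reveals inside each invocation of the uniqueness verification protocol of Section \ref{unique} used in block verification; (ii) the marking-card reveals inside each rearrangement protocol of Section \ref{revert}; and (iii) the helping-card and subsequent encoding-card reveals in Step 5 of Method A / Method B.

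For (i), the uniform column shuffle in Step 2 of Section \ref{unique} makes the revealed top row a uniformly random permutation of $\{a_1,\ldots,a_n\}$, and the second column shuffle makes the revealed bottom row a uniformly random permutation of $\{x_1,\ldots,x_n\}$; both distributions are manifestly independent of $P$'s solution. For (ii), the row and column shuffles applied at Step 1 of the rearrangement protocol make the revealed marking cards uniformly random permutations of $\{p_1,\ldots,p_k\}$ and $\{q_1,\ldots,q_\ell\}$, again independent of the solution.

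The more delicate case is (iii). I will argue that after Steps 2 and 3 of Method A the helping cards other than $x_1,y_1,z_1$ are, within each block, a uniformly random permutation over the remaining eight cells of that block. After the subsequent \texttt{row\_shuffle}$(\{1,2,3\})$ and the three independent \texttt{col\_shuffle}s, the triple of row-indices of $x_1,y_1,z_1$ becomes a uniformly random permutation of $\{1,2,3\}$, because $P$'s valid solution guarantees they started at three distinct rows and a uniform row permutation sends any fixed triple of distinct rows to a uniform permutation of $\{1,2,3\}$; independently, the column-index of each of $x_1,y_1,z_1$ inside its block is uniform over $\{1,2,3\}$; and conditioned on those positions the remaining helping cards are uniformly placed. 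Hence $P$'s secret enters only through the solution-invariant fact that the three $1$'s lie in distinct rows, so the joint distribution of the reveal is independent of which specific valid solution $P$ holds. The three encoding cards uncovered afterwards are by construction $a_1,b_1,c_1$, whose labels are public. The same argument, with only notational changes, handles Method B.

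With these distributions identified, $S$ is straightforward: it places arbitrary face-down cards on the grid, simulates each shuffle by sampling the prescribed uniform randomness, and at each face-up moment outputs a freshly sampled outcome from the distribution above; the intervening rearrangement protocols return the matrix to a canonical state, decoupling the randomness used in successive rounds. The main obstacle I anticipate is the bookkeeping in case (iii): one must verify that the composition of one row shuffle with several independent column shuffles really produces the claimed product distribution, and that revealing the helping cards leaks no residual information about $P$'s placement of the remaining encoding cards. Once one round is analyzed cleanly, the full protocol follows by repetition.
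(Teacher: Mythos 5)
Your proposal is correct and follows essentially the same route as the paper: both enumerate every face-up reveal (uniqueness verification, rearrangement protocol, and Step 5 of Methods A/B), show each revealed configuration is uniformly distributed over a set determined only by public data, and conclude a simulator can sample these distributions directly. Your added justification that the row shuffle sends the three distinct starting rows of $x_1,y_1,z_1$ to a uniform permutation of $\{1,2,3\}$ is exactly the (tersely stated) content of the paper's claim that these rows are uniform over all $3!=6$ permutations.
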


\begin{proof}
It is sufficient to show that all distributions of cards that are turned face-up can be simulated by a simulator $S$ that does not know $P$'s solution.

\begin{itemize}
	\item In Steps 3 and 5 of the uniqueness verification protocol in Section \ref{unique}, the orders of the $n$ cards are uniformly distributed among all $n!$ permutations. Hence, it can be simulated by $S$.
	\item In Step 2 of the rearrangement protocol in Section \ref{revert}, the orders of $p_1,p_2,...,$ $p_k$ and $q_1,q_2,...,q_\ell$ are uniformly distributed among all $k!$ permutations and $\ell!$ permutations, respectively. Hence, it can be simulated by $S$.
	\item In Step 5 of Method A in Section \ref{m1}, the rows where $x_1$, $y_1$, and $z_1$ are located are uniformly distributed among all $3!=6$ permutations of the first three rows; the columns where they are located are uniformly distributed among all $3^3=27$ combinations of three columns from Blocks $A$, $B$, and $C$. Also, the orders of $x_2,x_3,...,x_9$ are uniformly distributed among all $8!$ permutations of the remaining cards in Block $A$; the same goes for $y_2,y_3,...,y_9$ in Block $B$ and $z_2,z_3,...,z_9$ in Block $C$. Hence, it can be simulated by $S$.
	\item In Step 5 of Method B in Section \ref{m2}, the rows where $x_1$ and $y_1$ are located are uniformly distributed among all $\frac{3!}{1!}=6$ permutations of two rows chosen from the first three rows; the columns where they are located are uniformly distributed among all $3^2=9$ combinations of two columns from Blocks $A$ and $B$. Also, the orders of $x_2,x_3,...,x_9$ are uniformly distributed among all $8!$ permutations of the remaining cards in Block $A$; the same goes for $y_2,y_3,...,y_9$ in Block $B$. Hence, it can be simulated by $S$.
\end{itemize}
\end{proof}

\section{Optimization of the Number of Shuffles} \label{opt}
\subsection{Method A} \label{opta}
\begin{itemize}
	\item In the block verification in Section \ref{m0}, we can verify three blocks at a time using cards $x_j,y_j,z_j$ ($j=1,2,...,9$) as 27 additional cards in the uniqueness verification protocol (with a condition that all cards in the sets $x_j$, $y_j$, and $z_j$ must have different front sides and identical back sides). This reduces the number of shuffles by $2 \times 9 - 2 \times 3 = 12$ from 342 to 330.
	
	\item In Step 7 of Method A in Section \ref{m1}, we do not need to verify that a number 9 appears exactly once in each row and column. Since we have already verified that each of the numbers $1,2,...,8$ appears exactly once in each row (resp. column), the only remaining position in each row (resp. column) must contain a 9. This reduces the number of shuffles by $6 \times 3 \times 2 = 36$ from 330 to 294.
	
	\item In Step 4 of Method A, we can apply \texttt{col\_shuffle}($\{1,2,...,9\}$) instead of \texttt{col\_shuffle}\linebreak($\{1,2,3\}$), \texttt{col\_shuffle}($\{4,5,6\}$), and \texttt{col\_shuffle}($\{7,8,9\}$) to $M$ (with a condition that all cards in the sets $x_j$, $y_j$, and $z_j$ must have different front sides, so that we can tell different blocks apart after turning over helping cards). This reduces the number of shuffles by $2 \times 8 \times 3 \times 2 = 96$ from 294 to 198.
	
	\item In Step 6 of Method A, After verifying that numbers $j$ ($j=1,2,...,7$) in three selected blocks are all located at different rows (resp. columns), we do not have to revert $M$ back to its original state. Since $P$ knows exactly where the number $j+1$ in each block is (because $P$ knows which helping card is stacked on the encoding card corresponding to number $j+1$ from Steps 2 and 3), $P$ can immediately start the next round by performing the chosen cut protocol to find the number $j+1$. This reduces the number of shuffles by $2 \times 7 \times 3 \times 2 = 84$ from 198 to 114.
	
	\item In Step 8 of Method A, after verifying that numbers 8 in three selected blocks are all located at different columns, we do not have to revert $M$ back to its original state since the cards in these three blocks will not be used anymore. This reduces the number of shuffles by $2 \times 3 = 6$ from 114 to 108.
\end{itemize}

The formal steps of the optimized protocol for row/column verification are as follows.

\begin{enumerate}
	\item Take the cards from the three topmost rows to form a $3 \times 9$ matrix and publicly place marking cards $p_1,p_2,p_3$ in Column 0 and $q_1,q_2,...,q_9$ in Row 0 to create a $3 \times 9$ marked matrix $M$.
	\item On each encoding card in Block $A$, secretly stack each of face-down cards $x_1,x_2,...,x_9$ such that $x_1$ is on $a_1$, and $x_2,x_3,...,x_9$ are in a uniformly random permutation (which is known to $P$ but not to $V$).
	\item Do the same for cards $y_1,y_2,...,y_9$ in Block $B$ (with $y_1$ on $b_1$) and $z_1,z_2,...,z_9$ in Block $C$ (with $z_1$ on $c_1$).
	\item Apply \texttt{row\_shuffle}($\{1,2,3\}$), \texttt{col\_shuffle}($\{1,2,...,9\}$) to $M$.
	\item Turn over all helping cards. Locate the positions of $x_1$, $y_1$, and $z_1$. Turn over the encoding cards in these three stacks to show that they are $a_1$, $b_1$, and $c_1$, respectively, and that they are all located at different rows. Otherwise, $V$ rejects.
	\item Remove all helping cards and turn all encoding cards face-down.
	\item Perform Steps 1 to 6 analogously for numbers $2, 3, ..., 8$. Apply the rearrangement protocol in Section \ref{revert} to revert $M$ to its original state.
	\item Perform Steps 1 to 7 analogously for Blocks $D$, $E$, and $F$, and for Blocks $G$, $H$, and $I$ to verify the rest of the rows. The verification for columns also works similarly, but without applying the rearrangement protocol in Step 7.
\end{enumerate}

For an $n \times n$ puzzle, originally this method uses $2n + (\sqrt{n}+3) \times n \times \sqrt{n} \times 2 = 2n^2+6n\sqrt{n}+2n$ shuffles. After the optimization, it uses $2\sqrt{n} + (2(n-1)+2) \times \sqrt{n} \times 2 - 2\sqrt{n} = 4n\sqrt{n}$ shuffles.

\subsection{Method B} \label{optb}
\begin{itemize}
	\item In the block verification in Section \ref{m0}, we can verify three blocks at a time using cards $x_j,y_j$ ($j=1,2,...,9$), $p_j$ ($j=1,2,3$), and $q_j$ ($j=1,2,...,6$) as 27 additional cards in the uniqueness verification protocol (with a condition that all cards in the sets $x_j$, $y_j$, $p_j$ and $q_j$ must have different front sides and identical back sides). This reduces the number of shuffles by $2 \times 9 - 2 \times 3 = 12$ from 828 to 816.
\end{itemize}
	
	Note that this optimization cannot be straightforwardly generalized to an $n \times n$ puzzle. For an $n \times n$ puzzle with $n>9$, we can verify two blocks (not $\sqrt{n}$ blocks) at a time using $2n$ helping cards (as we have only $2n+3\sqrt{n}$ marking and helping cards). Hence, the block verification uses $2\lceil \frac{n}{2} \rceil$ shuffles.
	
\begin{itemize}	
	\item In Step 7 of Method B in Section \ref{m2}, we do not need to verify that a number 9 appears exactly once in each row and column. Since we have already verified that each of the numbers $1,2,...,8$ appears exactly once in each row (resp. column), the only remaining position in each row (resp. column) must contain a 9. This reduces the number of shuffles by $5 \times 3 \times 3 \times 2 = 90$ from 816 to 726.
	
	\item In Step 4 of Method B, we can apply \texttt{col\_shuffle}($\{1,2,...,6\}$) instead of \texttt{col\_shuffle}\linebreak($\{1,2,3\}$) and \texttt{col\_shuffle}($\{4,5,6\}$) to $M$ (with a condition that all cards in the sets $x_j$ and $y_j$ must have different front sides, so that we can tell different blocks apart after turning over helping cards). This reduces the number of shuffles by $8 \times 3 \times 3 \times 2 = 144$ from 726 to 582.
	
	\item In Step 6 of Method B, after verifying that numbers $j$ ($j=1,2,...,7$) in two selected blocks are located at different rows (resp. columns), we do not have to revert $M$ back to its original state. Since $P$ knows exactly where the number $j+1$ in each block is (because $P$ knows which helping card is stacked on the encoding card corresponding to number $j+1$ from Steps 2 and 3), $P$ can immediately start the next round by performing the chosen cut protocol to find the number $j+1$. This reduces the number of shuffles by $2 \times 7 \times 3 \times 3 \times 2 = 252$ from 582 to 330.
	
	\item Among the $3 \times 3 \times 2 = 18$ adjacent pairs of blocks we have to verify in Steps 8 and 9 of Method B, notice that the order of pairs we verify does not matter. Hence, we can set the order of verification such that the last four pairs of blocks we verify are Blocks $A$ and $D$, Blocks $B$ and $E$, Blocks $C$ and $F$, and Blocks $G$ and $H$. For these four pairs of blocks, after verifying that numbers 8 in the two blocks are located at different rows or columns, we do not have to revert $M$ back to its original state since the cards in these two blocks will not be used anymore. This reduces the number of shuffles by $2 \times 4 = 8$ from 330 to 322.
\end{itemize}
	
	Note that for an $n \times n$ puzzle, we set the order of verification such that the last $\lfloor \frac{n}{2}\rfloor$ pairs of blocks we verify contain $2\lfloor \frac{n}{2}\rfloor$ different blocks, hence reducing the number of shuffles by $2\lfloor \frac{n}{2}\rfloor$.

The formal steps of the optimized protocol for row/column verification are as follows.

\begin{enumerate}
	\item Take the cards from blocks $A$ and $B$ to form a $3 \times 6$ matrix and publicly place marking cards $p_1,p_2,p_3$ in Column 0 and $q_1,q_2,...,q_6$ in Row 0 to create a $3 \times 6$ marked matrix $M$.
	\item On each encoding card in Block $A$, secretly stack each of face-down cards $x_1,x_2,...,x_9$ such that $x_1$ is on $a_1$, and $x_2,x_3,...,x_9$ are in a uniformly random permutation (which is known to $P$ but not to $V$).
	\item Do the same for cards $y_1,y_2,...,y_9$ in Block $B$ (with $y_1$ on $b_1$).
	\item Apply \texttt{row\_shuffle}($\{1,2,3\}$), \texttt{col\_shuffle}($\{1,2,...,6\}$) to $M$.
	\item Turn over all helping cards. Locate the positions of $x_1$ and $y_1$. Turn over the encoding cards in both stacks to show that they are $a_1$ and $b_1$, respectively, and that they are located at different rows. Otherwise, $V$ rejects.
	\item Remove all helping cards and turn all encoding cards face-down.
	\item Perform Steps 1 to 6 analogously for numbers $2, 3, ..., 8$. Apply the rearrangement protocol in Section \ref{revert} to revert $M$ to its original state.
	\item Perform Steps 1 to 7 analogously for other 17 pairs of adjacent blocks in any order, but the last four pairs must be Blocks $A$ and $D$, Blocks $B$ and $E$, Blocks $C$ and $F$, and Blocks $G$ and $H$. Do not apply the rearrangement protocol in Step 7 for the last four pairs.
\end{enumerate}

For an $n \times n$ puzzle with $n>9$, originally this method uses $2n + 5 \times n \times \binom{\sqrt{n}}{2} \times \sqrt{n} \times 2 = 5n^2(\sqrt{n}-1)+2n$ shuffles. After the optimization, it uses $2\lceil \frac{n}{2} \rceil + (2(n-1)+2) \times \binom{\sqrt{n}}{2} \times \sqrt{n} \times 2 - 2\lfloor \frac{n}{2}\rfloor = 2n^2(\sqrt{n}-1)$ shuffles for an even $n$ and $2n^2(\sqrt{n}-1)+2$ shuffles for an odd $n$.

\section{Future Work}
We developed the first card-based ZKP protocol for Sudoku, and also the first one for any kind of logic puzzle, that can be performed using a deck of all different cards. Our protocol for a standard $9 \times 9$ Sudoku can be performed using two standard decks of playing cards, regardless of whether the two decks are identical or different. However, the drawback of our protocol is that it uses a large number of shuffles, which makes it not very practical. A possible future work is to develop an equivalent protocol for Sudoku that uses asymptotically less number of shuffles. Other challenging future work includes developing ZKP protocols for other logic puzzles (e.g. Kakuro, Numberlink) that can be performed using a deck of all different cards.

\end{document}